\documentclass[letterpaper, 10 pt, conference]{ieeeconf}  

\IEEEoverridecommandlockouts                              

\usepackage{titlesec}
\usepackage{booktabs}
\usepackage{caption-light}
\usepackage{nicefrac}
\usepackage{bm}
\usepackage{amsmath}
\usepackage{amssymb}
\usepackage{tikz}
\usepackage{pgfplots}
\pgfplotsset{compat=1.18} 
\usetikzlibrary{positioning}
\makeatletter
\let\NAT@parse\undefined
\makeatother

\usepackage{hyperref}
\hypersetup{
  hidelinks,
  pdfborder={0 0 0},
}

\usepackage{ifthen}

\usepackage{amsthm}

\renewcommand{\theparagraph}{\alph{paragraph}}

\titleformat{\paragraph}
  {\normalfont\normalsize\bfseries} 
  {\theparagraph)}                  
  {0.5em}                           
  {}                                

\titlespacing*{\paragraph}
  {0pt}                             
  {3.25ex plus 1ex minus .2ex}      
  {0em}                             

\titleformat{\subparagraph}[runin]
  {\normalfont\normalsize\bfseries}
  {}
  {0pt}
  {}

\titlespacing{\subparagraph}
  {0pt}
  {\medskipamount}
  {1em}

\newtheorem{theorem}{Theorem}

\newtheorem{remark}{Remark}

\usepackage{xcolor}

\usepackage{mathtools}
\usepackage{amssymb}
\usepackage{ifthen}
\usepackage{algorithm}

\newcommand{\define}{=\vcentcolon}
\newcommand{\defined}{\vcentcolon=}

\newcommand{\reals}{\mathbb{R}}

\newcommand{\Mat}[1][]{\ifthenelse{\equal{#1}{}}{\textnormal{Mat}}{\textnormal{Mat}(#1)}}

\newcommand{\tangent}[1]{
    \ifthenelse{\equal{#1}{}}
    {{T}}
    {{T_{#1}}}
}
\newcommand{\dualtangent}[1]{
    \ifthenelse{\equal{#1}{}}
    {{T^*}}
    {{T_{#1}^*}}
}

\newcommand{\Adjoint}[1]{
    \ifthenelse{\equal{#1}{}}
    {\textnormal{Ad}}
    {\textnormal{Ad}_{#1}}
}

\newcommand{\adjoint}[1]{
    \ifthenelse{\equal{#1}{}}
    {\textnormal{ad}}
    {\textnormal{ad}_{#1}}
}
\newcommand{\coAdjoint}[1]{
    \ifthenelse{\equal{#1}{}}
    {\textnormal{Ad}^*}
    {\textnormal{Ad}^*_{#1}}
}

\newcommand{\coadjoint}[1]{
    \ifthenelse{\equal{#1}{}}
    {\textnormal{ad}^*}
    {\textnormal{ad}^*_{#1}}
}

\newcommand{\set}[1]{\left\{#1\right\}}

\newcommand{\given}{\,\middle|\,}

\newcommand{\abs}[1]{\left|#1\right|}

\newcommand{\inverse}{{-1}}

\newcommand{\transpose}{\intercal}

\newcommand{\diag}[1]{\operatorname{diag}\left\{#1\right\}}

\newcommand{\Vector}[1]{{#1}}
\newcommand{\Matrix}[1]{\mathrm{#1}}
\newcommand{\identity}[1]{
    \ifthenelse{\equal{#1}{}}
    {\Matrix{I}}
    {\Matrix{I}_{#1}}
}

\newcommand{\indicator}[1]{\textbf{1}_{\left\{#1\right\}}}
\NewDocumentCommand{\Probability}{o m}{
  \mathsf{P}
  \IfValueT{#1}{_{#1}}
  \IfNoValueT{#1}{}
  \left[#2\right]
}
\NewDocumentCommand{\Expectation}{o m}{
  \mathsf{E}
  \IfValueT{#1}{_{#1}}
  \IfNoValueT{#1}{}
  \left[#2\right]
}
\NewDocumentCommand{\Variance}{o m}{
  \mathsf{Var}
  \IfValueT{#1}{_{#1}}
  \IfNoValueT{#1}{}
  \left[#2\right]
}
\NewDocumentCommand{\Covariance}{o m}{
  \mathsf{Cov}
  \IfValueT{#1}{_{#1}}
  \IfNoValueT{#1}{}
  \left[#2\right]
}

\newcommand{\Order}[1]{O\left(#1\right)}

\newcommand{\Gaussian}{\mathcal{N}}

\newcommand{\onehot}[2][]{\mathsf{e}_{#2}\ifthenelse{\equal{#1}{}}{}{^{(#1)}}}

\newcommand{\BDE}{\text{B}\Delta\text{E}}
\newcommand{\timeSpace}{\mathcal{T}}
\newcommand{\initial}{0}
\newcommand{\horizon}{T}
\newcommand{\order}{\tau}
\newcommand{\stateDimension}{\mathsf{d}}
\newcommand{\observationDimension}{\mathsf{m}}
\newcommand{\stateProcess}{X}
\newcommand{\observationProcess}{Z}
\newcommand{\transitionModel}{\Matrix{A}}
\newcommand{\observationModel}{\Matrix{C}}
\newcommand{\processNoise}{B}
\newcommand{\observationNoise}{W}
\newcommand{\covarianceProcessNoise}{\Matrix{Q}}
\newcommand{\covarianceObservationNoise}{\Matrix{R}}
\newcommand{\estimator}{S}
\newcommand{\estimated}[1]{\hat{#1}}
\newcommand{\covarianceError}{P}
\newcommand{\KalmanGain}{L}
\newcommand{\adjusted}[1]{\underline{#1}}
\newcommand{\dualstate}{y}
\newcommand{\control}{u}
\newcommand{\momentum}{p}

\newcommand{\mean}[1]{\bar{#1}}
\newcommand{\covariance}{\Sigma}
\newcommand{\initialMean}{\mu_{\initial}}
\newcommand{\initialCovariance}{\covariance_{\initial}}
\newcommand{\controlSpace}{\mathcal{U}}
\newcommand{\dualCost}{\mathsf{J}}
\newcommand{\iteration}{l}
\newcommand{\optimum}{{*}}

\newcommand{\covState}{\Sigma_{\stateProcess\stateProcess}}
\newcommand{\covStateObs}{\Sigma_{\stateProcess\observationProcess}}

\newcommand{\covObs}{\Sigma_{\observationProcess\observationProcess}}
\newcommand{\weight}{\Matrix{K}}
\newcommand{\smoothingWeight}{\weight^{(\text{s})}}
\newcommand{\filteringWeight}{\weight^{(\text{f})}}
\newcommand{\bias}{\Vector{b}}
\newcommand{\smoothingBias}{\bias^{(\text{s})}}
\newcommand{\filteringBias}{\bias^{(\text{f})}}
\newcommand{\innovations}{\Matrix{D}}
\newcommand{\stepsize}{\eta}
\newcommand{\dummyMatrix}{\Matrix{M}}

\newcommand{\trackingRate}{\alpha}
\newcommand{\oscillationAngle}{{\raisebox{0.1ex}{$\vartriangle$}\theta}}
\newcommand{\fractionCoefficient}{a}
\newcommand{\polynomialOrder}{q}
\newcommand{\discreteFrequency}{\Omega}
\newcommand{\error}{\text{MSE}}
\usepackage{xfp}

\usepackage{etoolbox}

\usepackage{pgfplotstable}

\def\lineWidth{1.2pt}    

\definecolor{colorBlue}{RGB}{31,119,180}
\definecolor{colorOrange}{RGB}{255,127,14}
\definecolor{colorGreen}{RGB}{44,160,44}
\definecolor{colorRed}{RGB}{214,39,40}

\def\kalmanColor{colorBlue}       
\def\smoothingColor{colorGreen}     
\def\causalColor{colorOrange}      
\def\dualColor{colorRed}      

\def\bS{\mathbb{S}}

\def\Sec#1{Sec.~\ref{#1}}
\def\Fig#1{Fig.~\ref{#1}}

\def\notes#1{\marginpar{\tiny #1}\typeout{Notes!
Notes!
Notes!
}}
\renewcommand{\notes}[1]{\typeout{notes!}}

\def\Sec#1{Sec.~\ref{#1}}

\def\Sec#1{Sec~\ref{#1}}

\def\Fig#1{Fig.~\ref{#1}}
\def\Sec#1{Sec.~\ref{#1}}

\def\beq{\begin{eqnarray}} 
\def\bc{\begin{center}} 
\def\be{\begin{enumerate}}
\def\bi{\begin{itemize}} 
\def\bs{\begin{small}}
\def\bS{\begin{slide}}
\def\ec{\end{center}} 
\def\ee{\end{enumerate}}
\def\ei{\end{itemize}}
\def\es{\end{small}}
\def\eS{\end{slide}}
\def\eeq{\end{eqnarray}}

\newcommand{\newP}[1]{\medskip\noindent{\bf #1:}}

\def\Sec#1{Sec.~\ref{#1}}
\def\Thm#1{Thm.~\ref{#1}}

\title{\LARGE \bf
Duality Theory for Non-Markovian Linear Gaussian Models
}

\author{
  Aditya Kudre$^{1,2}$, Heng-Sheng Chang$^{1,3}$, and Prashant G. Mehta$^{1,3}$
  \thanks{This work is supported in part by the AFOSR award FA9550-23-1-0060, the NSF award 2336137.}
  \thanks{%
    $^{1}$Coordinated Science Laboratory,
    $^{2}$Department of Electrical and Computer Engineering, and
    $^{3}$Department of Mechanical Science and Engineering at the  University of Illinois Urbana-Champaign. 
    Corresponding e-mail: hschang@illinois.edu
  }%
}

\begin{document}

\bstctlcite{BSTcontrol}

\maketitle
\thispagestyle{empty}
\pagestyle{empty}

\begin{abstract}
  This work develops a duality theory for partially observed linear Gaussian models in discrete time.
  The state process evolves according to a causal but non-Markovian (or higher-order Gauss-Markov) structure, captured by a lower-triangular transition operator, which is related to transformer, with $\horizon$ as the context length.
  The main contributions are:
  (i) a dual control system for the linear Gaussian model, formulated as a backward difference equation ($\BDE$);
  (ii) a duality principle establishing that a specific linear-quadratic optimal control problem for the $\BDE$ is dual to the filtering problem for the partially observed model;
  and (iii) an explicit optimal control formula yielding a novel (transformer-like) linear predictor, referred to as the dual filter, whose computational complexity scales linearly in the time horizon $\horizon$, in contrast to the $\Order{\horizon^3}$ cost of classical smoothing and Wiener-Hopf approaches.
\end{abstract}

\section{Introduction}
\label{sec:intro}

A decoder-only transformer is an inference architecture for the next-token prediction
as follows: 
\begin{equation*}
  \set{\observationProcess_0, \observationProcess_1, \hdots, \observationProcess_{\horizon - 1}} \mapsto [\text{prediction of}\;Z_T],
\end{equation*}
where the random variable (token) at time $t$, $\observationProcess_t$, is an element of a discrete finite set (vocabulary) and the prediction is the conditional probability vector $\Probability{\observationProcess_{\horizon}\given \observationProcess_{0:\horizon-1}}$.
Pertinent to the present paper, there are three salient features of the transformer algorithm:
\begin{enumerate}
  \item Non-recursive structure: The algorithm processes the entire sequence as a batch rather than recursively updating a state as a function of time.  
  \item Embedding: Discrete-valued tokens are mapped into a continuous $\stateDimension$-dimensional vector space, denoted as $\rho_t \in \reals^\stateDimension$, where all subsequent operations are conducted.
  \item Layer transformation: Each layer in a decoder-only transformer implements a causal transformation as follows:
\begin{equation*}  
  \hspace*{-\leftmargin}\begin{bmatrix} \rho_0 & \rho_1 & \cdots & \rho_{\horizon-1} \end{bmatrix}_{\stateDimension \times \horizon} \mapsto \begin{bmatrix} \rho_0^+ & \rho_1^+ & \cdots & \rho_{\horizon-1}^+ \end{bmatrix}_{\stateDimension \times \horizon},
\end{equation*}
where causal means $\rho_t^+$ depends upon $\{\rho_0,\rho_1,\hdots,\rho_t\}$. 
\end{enumerate}
See~\cite{phuong2022formal} for additional details.

Inspired by the decoder-only transformer, we re-visit inference architectures for linear prediction of vector-valued Gaussian processes---the most celebrated of which are the Kalman and Wiener filters.  Specifically, in this paper, $\observationProcess_t$ is $\reals^\observationDimension$-valued and the particular choice of prediction is the conditional expectation vector $\Expectation{\observationProcess_{\horizon}\given \observationProcess_{\initial:\horizon-1}} \define \estimated{\observationProcess}_{\horizon|\horizon-1}$---which is a natural analogue for continuous-valued Gaussian processes.  To help relate our work to transformers, we assume the following structure for the linear Gaussian model:
\begin{enumerate}
\item The observation process is modeled as $\observationProcess_t = \observationModel_t \stateProcess_t + \observationNoise_t$ , where $\stateProcess_t$ denotes an $\reals^\stateDimension$-valued hidden state process and $\{\observationNoise_t: 0 \le t \le \horizon\}$ is an independent Gaussian noise.  The matrix $C_t$ plays the role of the embedding in the sense that it maps the hidden state to the observation space. 
\item The model of the hidden state process $\{\stateProcess_t: 0 \leq t \leq \horizon\}$ is causal but non-Markovian (or higher-order Markovian): At time $t$, the hidden state $\stateProcess_t$ is allowed to depend upon the entire history $\{\stateProcess_s: 0 \leq s \leq t-1\}$.  The modeling choice is inspired by the long range dependencies that transformer architectures are designed to capture.
\end{enumerate}
The probabilistic graphical model is depicted in~\Fig{fig:model}.  The model can be viewed as a linear Gaussian abstraction of the sequential structure underlying decoder-only transformers.

The main difficulty comes from the non-Markovian structure of the hidden state 
process. Specifically, a recursive Kalman filter-type algorithm will suffer from 
a growing state-dimension as time progresses, because the filter must maintain 
an estimate of the full joint state $(X_0, X_1, \ldots, X_t)$ as a sufficient 
statistic. For this reason, non-recursive algorithms become an attractive 
alternative, and duality theory provides a principled route to their derivation.


\newP{Original Contributions} 
This paper extends the classical Kalman duality theory to non-Markovian 
linear Gaussian models, formally introduced in \Sec{sec:problem}.  
Specifically, the following questions are of interest:
\begin{description}
  \item \textbf{Q1.} What is the dual control system?
    \smallskip
  \item \textbf{Q2.} What is the optimal control problem that is dual to the optimal filtering problem?
  \smallskip
  \item \textbf{Q3.} How does the solution of the dual optimal control problem relate structurally to transformer layer operations?
\end{description}
In this paper, we address each of these questions (see \Sec{sec:dualcontrolsys} for Q1, \Sec{sec:dual-filter} for Q2, and \Sec{sec:transformer-arch-relation} for Q3).  
The solution to the dual optimal control problem yields an explicit optimal control law. 
This in turn leads to a novel linear predictor --- the dual filter --- whose connections to classical methods and to transformer architectures are discussed in \Sec{sec:classical-methods} and \Fig{fig:transformer-objective}, 
respectively.

\newP{Outline} 
The remainder of this paper is organized as follows. The linear Gaussian model and problem formulation are introduced in \Sec{sec:problem}. 
Duality theory is developed in \Sec{sec:duality} and the main results are presented in \Sec{sec:results}. 
Classical methods are compared in \Sec{sec:classical-methods}. 
Numerical experiments are reported in \Sec{sec:numerics} and conclusions are given in \Sec{sec:conclusion}. 
Proofs and technical details are collected in the appendices.

\begin{figure*}[t]
  \medskip
  \centering
  \begin{tikzpicture}[scale=0.9, every node/.style={scale=0.8}, font=\footnotesize]
    \input{figures/model.tex}
  \end{tikzpicture}
  \caption{
    Graphical representation of the non-Markovian linear Gaussian model with order $\order$ and horizon $\horizon$. 
    The state process $\stateProcess$ is represented by circles and the observation process $\observationProcess$ is represented by squares. 
    The arrows represent the causal dependencies between the processes, including transition matrices $\transitionModel_{t,s}$ and $\observationModel_t$.
    The dashed arrows indicate the prediction task of estimating $\observationProcess_\horizon$ given the past observations $\observationProcess_{\initial:\horizon-1}$. 
  }
  \label{fig:model}
\end{figure*}

\section{Problem Formulation}
\label{sec:problem}

We introduce the non-Markovian linear Gaussian model.  
The two processes are as follows:
\begin{align*}
  \text{(hidden state process)} &\quad \stateProcess\defined\set{\stateProcess_\initial,\stateProcess_1,\hdots,\stateProcess_{\horizon-1},\stateProcess_\horizon},\\
  \text{(observation process)} &\quad \observationProcess\defined\set{\observationProcess_0,\observationProcess_1,\hdots,\observationProcess_{\horizon-1},\observationProcess_\horizon},
\end{align*}
where $\observationProcess_\horizon$ is to be predicted rather than observed. 
Time is indexed by $t \in \set{0,1,\hdots,\horizon}$ with finite horizon $\horizon$. 
The state $\stateProcess_t \in \reals^{\stateDimension}$ and observation $\observationProcess_t \in \reals^{\observationDimension}$, where $\stateDimension$ and $\observationDimension$ are finite.
A key feature of the model is its causal, non-Markovian structure, illustrated in~\Fig{fig:model}, and defined as follows:
\begin{subequations}
  \begin{align}
    \stateProcess_{t} &= \sum_{s=1}^{\min(\order,t)} \transitionModel_{t,s}\stateProcess_{t-s} + \processNoise_{t}\label{eq:process},\quad 1\leq t \leq \horizon \\
    \observationProcess_t &= \observationModel_t \stateProcess_t + \observationNoise_t ,\quad 0\leq t \leq \horizon\label{eq:observation}
  \end{align}%
  \label{eq:model}%
\end{subequations}
where $\order$ is the deterministic model-order parameter.
The observation matrices $\observationModel_t$ and transition matrices $\transitionModel_{t,s}$ are deterministic and known for all $t,s$.  
Stochasticity is introduced through three mutually independent Gaussian sources: 
(1) the initial condition $\stateProcess_\initial \sim \Gaussian(\initialMean, \initialCovariance)$ with mean $\initialMean$ and covariance $\initialCovariance\succeq 0$, 
(2) the white Gaussian noise (WGN) process $\processNoise_t \sim \Gaussian(0, \covarianceProcessNoise_t)$ with $\covarianceProcessNoise_t\succeq 0$, and 
(3) the WGN process $\observationNoise_t \sim \Gaussian(0, \covarianceObservationNoise_t)$ with $\covarianceObservationNoise_t\succ 0$. 

Denote
$
  \timeSpace\defined\set{1, \hdots, \horizon}$ and $\timeSpace_-\defined\set{0, \hdots, \horizon-1}$. 
A sequence $\set{V_t:0\leq t\leq \horizon}$ is denoted as a column vector 
\begin{equation*}
  V_{t_1:t_2}\defined\begin{bmatrix}
    V_{t_1} \hfill\\
    V_{t_1+1} \\
    \vdots \\
    V_{t_2}\hfill
  \end{bmatrix},\quad
  V_{\timeSpace}\defined V_{1:\horizon},\quad
  V_{\timeSpace_-}\defined V_{\initial:\horizon-1}.
\end{equation*}
Using this notation, the model \eqref{eq:model} is compactly expressed as follows:
\begin{subequations}\label{eq:compact}
  \begin{align}
    \stateProcess_{\timeSpace} &= \transitionModel \stateProcess_{\timeSpace_-} + \processNoise_{\timeSpace} , \qquad X_0 \sim  \Gaussian(\initialMean, \initialCovariance),\quad \label{eq:compact-process}\\
    \observationProcess_{\timeSpace_-} &= \observationModel \stateProcess_{\timeSpace_-} + \observationNoise_{\timeSpace_-},\label{eq:compact-observation}
  \end{align}
\end{subequations}
and $\observationProcess_\horizon = \observationModel_\horizon \stateProcess_\horizon + \observationNoise_\horizon$ is stated separately as it is the variable to be predicted.
The block matrices $\transitionModel$ and $\observationModel$ are defined as follows:
\begin{align*}
  \transitionModel &\defined \begin{bmatrix}
    \transitionModel_{1,1} & 0 & \cdots & 0 \\
    \transitionModel_{2,2} & \transitionModel_{2,1}  & \cdots & 0 \\
    \vdots & \vdots & \ddots & \vdots \\
    \transitionModel_{\horizon,\horizon} & \transitionModel_{\horizon,\horizon-1} & \cdots & \transitionModel_{\horizon,1}
  \end{bmatrix},\\[2pt]
    \observationModel &\defined \diag{\observationModel_{0}, \dots, \observationModel_{\horizon-1}}.
\end{align*}
Similarly, the block-diagonal covariance matrices for the two WGNs are denoted by
\begin{align*}
  \covarianceProcessNoise &\defined \diag{\covarianceProcessNoise_1, \dots, \covarianceProcessNoise_{\horizon}}, \\
  \covarianceObservationNoise &\defined \diag{\covarianceObservationNoise_0, \dots, \covarianceObservationNoise_{\horizon-1}}.
\end{align*}
Because of the lower-triangular form of the matrix $\transitionModel$, the model~\eqref{eq:compact} is well-posed and the pair $(\stateProcess,\observationProcess)$ is well-defined (i.e., $\stateProcess_t$ is uniquely determined by $\stateProcess_\initial$ and the noises).

\subsection{Problem Statement}
Our goal is to compute
\begin{equation}
  \text{(next-step prediction)}\quad\estimated{\observationProcess}_{\horizon|\horizon-1} \defined \Expectation{\observationProcess_{\horizon} \given \observationProcess_{\initial:\horizon-1}}.
  \label{eq:estimation-problem}
\end{equation}
Using~\eqref{eq:observation}, $\estimated{\observationProcess}_{\horizon|\horizon-1} = \observationModel_{\horizon} \estimated{\stateProcess}_{\horizon|\horizon-1}$, where the state estimate $\estimated{\stateProcess}_{\horizon|\horizon-1} \defined \Expectation{\stateProcess_{\horizon} \given \observationProcess_{\initial:\horizon-1}}$. 

For any deterministic vector $f \in \reals^{\stateDimension}$, consider
\begin{equation}
f^\transpose \estimated{\stateProcess}_{\horizon|\horizon-1} = \text{(constant)}  - \sum_{t=\initial}^{\horizon-1} \control^\transpose_{t} \observationProcess_{t}, \label{eq:predictor}
\end{equation}
where $\control \defined \{\control_\initial, \dots, \control_{\horizon-1}\}$ denotes a deterministic sequence of $\reals^\observationDimension$-valued weights, hereafter referred to as the control. By setting $f^\transpose$ to the rows of the observation matrix $\observationModel_{\horizon}$, this representation recovers the complete observation prediction $\estimated{\observationProcess}_{\horizon|\horizon-1}$.

The main result of this paper is the derivation of an explicit, closed-form formula for the control sequence $\control$ over the horizon $\horizon$. This result is obtained by formulating and solving a dual optimal control problem.

\newP{Relationship to Literature} The representation~\eqref{eq:predictor} is justified based on the theory for Gaussian processes; cf.,~\cite[Corollary 1.10]{le2016brownian}.  Prior work on linear Gaussian models with non-Markovian structure generally follows three established estimation methodologies. 
These include the growing-state Kalman filter, which utilizes a growing state vector to track the system's full trajectory history~\cite[Chapter 8.2]{bar2001estimation},~\cite{tang2024augmented}; 
batch smoothing~\cite{bode1950smoothing}, which leverages the joint Gaussian structure of the states and observations to compute conditional means via closed-form matrix expressions;
and the causal Wiener-Hopf filter~\cite{wiener1949extrapolation}, which employs block factorizations to derive the optimal linear estimator while strictly enforcing causality~\cite{kailath2000linear}.
A more detailed overview of these classical approaches is provided in \Sec{sec:classical-methods}. An alternate class of non-Markovian processes include the reciprocal processes~\cite{white2017stochastic,moura2002recursive,levy2002modeling}.  These are not explicitly addressed because reciprocal processes admit non-causal dependencies, which are outside the causal framework considered here.

\section{Duality Theory}
\label{sec:duality}

\subsection{Dual Control System}
\label{sec:dualcontrolsys}

To develop the duality theory, we introduce a dual control system that 
runs backward in time, in contrast to the forward structure of the 
state process $X$. For this purpose, consider two deterministic processes as follows:
\begin{align*}
  \text{(dual state process)} &\quad \dualstate\defined\set{\dualstate_\initial, \dualstate_1,\hdots,\dualstate_{\horizon-1},\dualstate_{\horizon}},\\
  \text{(control process)}& \quad \control\defined\{\control_\initial,\control_1,\hdots,\control_{\horizon-1}\},
\end{align*}
where $\dualstate$ is $\reals^\stateDimension$-valued and $\control$ is $\reals^\observationDimension$-valued.  The lower case notation is used to stress the fact that these are deterministic processes. 
The space of deterministic control inputs is denoted as $\controlSpace 
\defined \{u = \{u_0, \ldots, u_{\horizon-1}\} : u_t \in \reals^\observationDimension,  
\; t \in \timeSpace^-\}$.
While the state process $X$ has a causal forward structure, the dual state process $\dualstate$ solves a backward difference equation ($\BDE$) as follows:
\begin{equation}
  \dualstate_{\timeSpace_-} = \transitionModel^\transpose \dualstate_{\timeSpace} + \observationModel^\transpose\control_{\timeSpace_- },\quad \dualstate_{\horizon}=f,
  \label{eq:dual_BDE}
\end{equation}
where $f$ is the terminal condition of the dual state process $\dualstate$ at time $\horizon$.
Recall $\observationModel\in\reals^{\horizon\observationDimension\times\horizon\stateDimension}$ is a block-diagonal matrix, and $\transitionModel\in\reals^{\horizon\stateDimension\times\horizon\stateDimension}$ is a block lower-triangular matrix.  
Therefore, $\transitionModel^\transpose$ is a block upper-triangular matrix and~\eqref{eq:dual_BDE} is well-posed (i.e., $y_t$ is uniquely determined by $f$ and the control sequence).  
Equation~\eqref{eq:dual_BDE} is referred to as the dual control system.

\subsection{Dual Optimal Control Problem}

For a fixed control $\control \in \controlSpace$ and a terminal condition $\dualstate_\horizon = f$, let $\dualstate$ denote the corresponding solution of~\eqref{eq:dual_BDE}. 
We define an optimal control type cost associated with this trajectory $\dualstate$ as follows:
\begin{equation}
  \dualCost_\horizon(\control;f) = \frac{1}{2}\abs{\dualstate_\initial}_{\initialCovariance}^2 + \sum_{t=1}^{\horizon}\frac{1}{2} \abs{\dualstate_t}_{\covarianceProcessNoise_t}^2 + \sum_{t=0}^{\horizon-1} \frac{1}{2} \abs{\control_t}_{\covarianceObservationNoise_t}^2,
  \label{eq:dual-cost}
\end{equation}
where $\abs{v}_{\dummyMatrix}^2 \defined v^\transpose \dummyMatrix v$ for any vector $v$ and any positive (semi)definite matrix $\dummyMatrix$.
The relationship to the estimation objective \eqref{eq:predictor} is given in the following theorem.

\begin{theorem}[Duality principle]\label{thm:duality-principle}
 Consider an estimator
  \begin{equation}
    \estimator_\horizon \defined \dualstate_\initial^\transpose\initialMean - \sum_{t=0}^{\horizon-1} \control_t^\transpose \observationProcess_{t},
    \label{eq:estimator}
  \end{equation}
  where $\dualstate$ is the solution of the dual control system~\eqref{eq:dual_BDE} 
for a control input $\control\in\controlSpace$ with terminal condition 
$\dualstate_\horizon = f\in\reals^\stateDimension$.
  Then 
  \begin{equation*}
    \dualCost_\horizon(\control;f) = \Expectation{\frac{1}{2}\abs{f^\transpose \stateProcess_{\horizon} - \estimator_\horizon}^2}.
  \end{equation*}
\end{theorem}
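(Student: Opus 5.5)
The argument is the classical Kalman duality computation, adapted to the block-triangular structure: a discrete summation-by-parts (adjoint) identity that writes $f^\transpose\stateProcess_\horizon - \estimator_\horizon$ as a linear combination of the independent Gaussian sources, followed by taking the second moment.

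\textbf{Step 1: adjoint identity.} Stack the dual state as $\dualstate_{\timeSpace}$ and $\dualstate_{\timeSpace_-}$, and the state as $\stateProcess_{\timeSpace}$ and $\stateProcess_{\timeSpace_-}$. Both stacks share the interior blocks $1,\hdots,\horizon-1$. This gives the telescoping identity
\begin{equation*}
  \dualstate_{\timeSpace}^\transpose \stateProcess_{\timeSpace} - \dualstate_{\timeSpace_-}^\transpose \stateProcess_{\timeSpace_-} = f^\transpose \stateProcess_\horizon - \dualstate_\initial^\transpose \stateProcess_\initial .
\end{equation*}
Substitute~\eqref{eq:compact-process} into the first term and~\eqref{eq:dual_BDE} into the second. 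The terms $\dualstate_{\timeSpace}^\transpose\transitionModel\stateProcess_{\timeSpace_-}$ and $(\transitionModel^\transpose\dualstate_{\timeSpace})^\transpose\stateProcess_{\timeSpace_-}$ cancel. This is exactly where the choice of $\transitionModel^\transpose$ in the $\BDE$ is used, and it is the only place the non-Markovian structure enters. We are left with
\begin{equation*}
  f^\transpose\stateProcess_\horizon = \dualstate_\initial^\transpose\stateProcess_\initial + \dualstate_{\timeSpace}^\transpose\processNoise_{\timeSpace} - \control_{\timeSpace_-}^\transpose\observationModel\stateProcess_{\timeSpace_-}.
\end{equation*}

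\textbf{Step 2: error representation.} Use~\eqref{eq:compact-observation} to write $\sum_t \control_t^\transpose\observationProcess_t = \control_{\timeSpace_-}^\transpose(\observationModel\stateProcess_{\timeSpace_-} + \observationNoise_{\timeSpace_-})$. Subtracting $\estimator_\horizon$ from the expression in Step 1, the $\observationModel\stateProcess_{\timeSpace_-}$ terms cancel, and
\begin{equation*}
  f^\transpose\stateProcess_\horizon - \estimator_\horizon = \dualstate_\initial^\transpose(\stateProcess_\initial-\initialMean) + \sum_{t=1}^{\horizon}\dualstate_t^\transpose\processNoise_t + \sum_{t=0}^{\horizon-1}\control_t^\transpose\observationNoise_t .
\end{equation*}

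\textbf{Step 3: second moment.} Every term on the right has zero mean, and the three sources (together with the individual noise samples at different times) are mutually independent, so all cross terms vanish in expectation. The remaining expectations are $\dualstate_\initial^\transpose\initialCovariance\dualstate_\initial$, $\dualstate_t^\transpose\covarianceProcessNoise_t\dualstate_t$ and $\control_t^\transpose\covarianceObservationNoise_t\control_t$. Multiplying by $\tfrac12$ gives exactly $\dualCost_\horizon(\control;f)$ in~\eqref{eq:dual-cost}.

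The only step needing care is the index bookkeeping in the telescoping identity of Step 1. One must check that $\dualstate_{\timeSpace}$ and $\stateProcess_{\timeSpace}$ cover times $1,\hdots,\horizon$, that $\dualstate_{\timeSpace_-}$ and $\stateProcess_{\timeSpace_-}$ cover times $0,\hdots,\horizon-1$, and hence that only the boundary terms at $t=0$ and $t=\horizon$ survive. Steps 2 and 3 are routine.
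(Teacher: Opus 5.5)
Your proposal is correct and follows essentially the same route as the paper. The paper also substitutes the compact state equation and the $\BDE$ into $\dualstate_{\timeSpace}^\transpose\stateProcess_{\timeSpace}$ so that the $\transitionModel$-terms cancel, obtains the pairing identity expressing $f^\transpose\stateProcess_\horizon-\estimator_\horizon$ in terms of $\stateProcess_\initial-\initialMean$, $\processNoise$ and $\observationNoise$, and then takes second moments using independence. The only cosmetic difference is that you make the telescoping of the shifted stacks explicit and substitute the observation equation afterwards, whereas the paper folds both into one computation.
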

\begin{proof}
See Appendix~\ref{appdx:duality-principle}.
\end{proof}

\begin{remark}[Interpretation of the duality principle]
  The duality principle establishes that the optimal control cost $\dualCost_\horizon(\control;f)$ is equal to the mean-squared error of the estimator $\estimator_\horizon$ for predicting $f^\transpose \stateProcess_{\horizon}$.
Therefore, minimizing the control cost over $\control$ is equivalent to 
finding the best affine predictor of $f^\transpose \stateProcess_{\horizon}$ 
based on the observations $\observationProcess_{\initial:\horizon-1}$.
\end{remark}

The duality principle provides a control-theoretic approach to
compute the conditional expectation by solving an optimal control
problem given as follows:

\newP{Dual Optimal Control Problem}
\begin{equation}\label{eq:dual-optimal-control}
  \min_{\control \in \controlSpace} \; \dualCost_\horizon(\control;f)
  \quad \text{subject to} \quad\text{dual control system }\eqref{eq:dual_BDE}
\end{equation}

\section{Main Results}
\label{sec:results}

\subsection{Formula for optimal control}
\label{sec:dual-filter}

To solve the dual optimal control problem \eqref{eq:dual-optimal-control}, we introduce the momentum process $\momentum$, which is a deterministic $\reals^\stateDimension$-valued costate process associated with $\dualstate$:
\begin{equation*}
  \text{(momentum)} \quad \momentum\defined\set{\momentum_\initial, \momentum_1,\hdots,\momentum_{\horizon-1}, \momentum_{\horizon}}.
\end{equation*}
The optimality conditions for the control sequence are given in the following theorem:
\begin{theorem}[Optimal Control]\label{thm:hamilton}
  Consider the optimal control problem~\eqref{eq:dual-optimal-control}.
  \begin{subequations}
   Then the optimal control is given by the following forward-backward system:
    \begin{align}
      &\text{(backward)} & \dualstate_{\timeSpace_-} &= \transitionModel^\transpose \dualstate_{\timeSpace} + \observationModel^\transpose \control_{\timeSpace_-},\quad\dualstate_{\horizon}=f\label{eq:backward}\\
      &\text{(forward)} & \momentum_{\timeSpace} &= \transitionModel \momentum_{\timeSpace_-} + \covarianceProcessNoise \dualstate_{\timeSpace},\quad \momentum_\initial = \initialCovariance \dualstate_\initial\label{eq:forward}\\
      &\text{(opt. control)} & \control_{t} &= - \covarianceObservationNoise_t^\inverse\observationModel_t \momentum_{t},\quad \forall~t \in \timeSpace_- \label{eq:optimal_control} 
    \end{align}
    \label{eq:hamilton}
  \end{subequations}
  \vspace{-1.5\baselineskip}
\end{theorem}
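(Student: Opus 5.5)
The plan is to treat \eqref{eq:dual-optimal-control} as a finite-dimensional convex quadratic program and derive \eqref{eq:hamilton} from its first-order (Lagrangian/Pontryagin) conditions. Since $\transitionModel^\transpose$ is block upper-triangular with zero diagonal blocks in the appropriate shift sense, \eqref{eq:dual_BDE} can be solved explicitly: $\dualstate$ is an affine function of $(\control,f)$. Hence $\dualCost_\horizon(\cdot;f)$ is a quadratic function of $\control\in\controlSpace$. It is strictly convex because the term $\sum_t \frac12\abs{\control_t}^2_{\covarianceObservationNoise_t}$ has $\covarianceObservationNoise_t\succ 0$ and the remaining terms are convex (positive semidefinite weights composed with an affine map). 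So a unique minimizer exists, and it is characterized by stationarity. It therefore suffices to show that the stationarity conditions are exactly \eqref{eq:forward}--\eqref{eq:optimal_control}, together with the constraint \eqref{eq:backward}.

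Concretely, I will introduce a multiplier $\momentum_{\timeSpace_-}$ (blockwise $\momentum_0,\dots,\momentum_{\horizon-1}$) for the constraint $\dualstate_{\timeSpace_-}-\transitionModel^\transpose\dualstate_{\timeSpace}-\observationModel^\transpose\control_{\timeSpace_-}=0$. The terminal value $\dualstate_\horizon=f$ is fixed and not a decision variable. The Lagrangian is
\begin{equation*}
\mathcal{L}=\tfrac12\abs{\dualstate_0}^2_{\initialCovariance}+\tfrac12\dualstate_{\timeSpace}^\transpose\covarianceProcessNoise\dualstate_{\timeSpace}+\tfrac12\control_{\timeSpace_-}^\transpose\covarianceObservationNoise\control_{\timeSpace_-}-\momentum_{\timeSpace_-}^\transpose\big(\dualstate_{\timeSpace_-}-\transitionModel^\transpose\dualstate_{\timeSpace}-\observationModel^\transpose\control_{\timeSpace_-}\big).
\end{equation*}
The sign of the multiplier is chosen so that the resulting equations match \eqref{eq:hamilton}.

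\textbf{Stationarity in $\control$.} This gives $\covarianceObservationNoise\control_{\timeSpace_-}+\observationModel\momentum_{\timeSpace_-}=0$. Because $\covarianceObservationNoise$ and $\observationModel$ are block diagonal, this reads blockwise as \eqref{eq:optimal_control}.

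\textbf{Stationarity in $\dualstate$.} The decision variables are $\dualstate_0,\dots,\dualstate_{\horizon-1}$, while $\dualstate_\horizon=f$ is fixed. Here some index bookkeeping is needed, since $\dualstate_{\timeSpace}$ and $\dualstate_{\timeSpace_-}$ overlap in $\dualstate_{1:\horizon-1}$. I will handle this by defining $\momentum_\horizon$ as the last block of $\transitionModel\momentum_{\timeSpace_-}+\covarianceProcessNoise\dualstate_{\timeSpace}$; this is a definition, not a condition. With that convention:
\begin{itemize}
\item Differentiating in $\dualstate_0$ gives $\initialCovariance\dualstate_0-\momentum_0=0$, which is the initial condition in \eqref{eq:forward}.
\item Differentiating in $\dualstate_t$ for $1\le t\le\horizon-1$ gives $\covarianceProcessNoise_t\dualstate_t-\momentum_t+(\transitionModel\momentum_{\timeSpace_-})_t=0$. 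This is the $t$-th block of \eqref{eq:forward}.
\end{itemize}
Since $(\transitionModel\momentum_{\timeSpace_-})_t=\sum_{s}\transitionModel_{t,s}\momentum_{t-s}$ involves only earlier momenta, the system is indeed a forward recursion.

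The main obstacle I expect is exactly this indexing: verifying that the adjoint of $\dualstate_{\timeSpace}\mapsto\transitionModel^\transpose\dualstate_{\timeSpace}$ paired against $\momentum_{\timeSpace_-}$ yields $\transitionModel\momentum_{\timeSpace_-}$ placed in the $\timeSpace$ slots. This uses the identity $\momentum_{\timeSpace_-}^\transpose\transitionModel^\transpose\dualstate_{\timeSpace}=\dualstate_{\timeSpace}^\transpose\transitionModel\momentum_{\timeSpace_-}$. The check is routine once written out.

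Finally, sufficiency follows from convexity with linear equality constraints: any solution of the forward-backward system satisfies the KKT conditions, hence is the unique minimizer. As an alternative route, one can verify the variational identity directly. For a perturbation $\delta\control$ with induced $\delta\dualstate$ (where $\delta\dualstate_\horizon=0$), summation by parts using \eqref{eq:backward} and \eqref{eq:forward} shows the first variation equals $\sum_t\delta\control_t^\transpose(\covarianceObservationNoise_t\control_t+\observationModel_t\momentum_t)$, which vanishes under \eqref{eq:optimal_control}.
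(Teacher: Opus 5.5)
Your argument is correct, but your main route differs from the paper's. The paper never introduces a multiplier. It takes the momentum $p$ as \emph{defined} by the forward recursion \eqref{eq:forward} and perturbs $u = u^* + \Delta u$. It then splits the cost as $\dualCost_\horizon(u;f) = \dualCost_\horizon(u^*;f) + \dualCost_\horizon(\Delta u;0) + \mathcal{I}_\horizon$. Finally, it uses \eqref{eq:forward} for $p^*$, together with the dual dynamics \eqref{eq:dual_BDE} for $\Delta y$ (with $\Delta y_\horizon = 0$), to telescope the cross term into $\sum_t (C_t p^*_t + R_t u^*_t)^\intercal \Delta u_t$. This is exactly the ``alternative route'' in your last paragraph.

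Your KKT derivation instead treats $y_{0:\horizon-1}$ and $u$ as independent variables. The forward equation and its initial condition $p_0 = \Sigma_0 y_0$ then emerge as the stationarity conditions in $y$, so the momentum equation is derived rather than posited. It also makes existence, uniqueness, and sufficiency explicit, via strict convexity from $R_t \succ 0$ and the linearity of the constraints. The paper leaves these implicit: it assumes an optimal $u^*$ exists and only argues necessity, although the nonnegativity of $\dualCost_\horizon(\Delta u;0)$ in its expansion would give sufficiency at once. You also correctly note that the $t=\horizon$ block of \eqref{eq:forward} is a definition of $p_\horizon$ rather than an optimality condition, which the paper does not say.

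The paper's route, in turn, avoids the index bookkeeping caused by the overlapping blocks $y_{\timeSpace}$ and $y_{\timeSpace_-}$ in your Lagrangian. The price is that it needs the form of $p$ known in advance.
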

\begin{proof}
See Appendix~\ref{appdx:hamilton}.
\end{proof}

\begin{remark}[Inversion of \(\horizon\)-dimensional Matrix] 
  The momentum process $\momentum$ serves as the costate variable in the 
  dual optimal control problem. 
  The optimal control formula~\eqref{eq:optimal_control} expresses $\control_t$ directly in terms of $\momentum_t$, without requiring the inversion of $\horizon$-dimensional matrix---the key structural property that yields the linear complexity of the dual filter.
\end{remark}


\subsection{Dual Filter Algorithm}

The dual filter (Algorithm~\ref{alg:dual-filter}) is an adjoint-based 
iterative procedure derived from the optimality conditions in~\Thm{thm:hamilton}.  
Each iteration consists of a 
backward pass to update the dual state $\dualstate$, a forward pass to 
update the momentum $\momentum$, and a gradient-based update of the 
control sequence $\control$. 
The optimal next-step prediction $\estimated{\observationProcess}_{\horizon|\horizon-1}$ in \eqref{eq:estimation-problem} is recovered by 
running the algorithm with $f$ set to each row of the observation 
matrix $\observationModel_\horizon$.

\newP{Complexity Analysis}
The backward and forward passes each require $\Order{\horizon\order\stateDimension}$ operations;
together with the $\Order{\horizon\observationDimension}$ control update, each iteration costs $\Order{\horizon(\order\stateDimension+\observationDimension)}$, linear in the context length $\horizon$ for a fixed order of $\order$ and fixed dimensions $\stateDimension$ and $\observationDimension$.
The memory cost is $\Order{\horizon\stateDimension}$ for storing the dual state and momentum trajectories, and $\Order{\horizon\observationDimension}$ for storing the observation trajectory, which is also linear in $\horizon$ for fixed dimensions.  In numerical simulations, typically a small number of iterations suffice for convergence as described in \Sec{sec:numerics}..

\begin{algorithm}[t] 
  \caption{Dual Filter}
  \label{alg:dual-filter}
  \begin{description}
    \item \textbf{Requirement}: Model parameters $\transitionModel$, $\observationModel$, $\initialMean$, $\initialCovariance$, $\covarianceProcessNoise$, $\covarianceObservationNoise$ 
    \item \textbf{Input}: terminal condition $f$; and observations $\observationProcess_{\initial:\horizon-1}$
    \item \textbf{Output}: optimal estimator $\estimator_\horizon$
  \end{description}
  \begin{enumerate}
    \item \textbf{Initialization}: Set $\control^{(0)}_t = 0$ for all $t\in\set{0,1,\hdots,\horizon-1}$

    \item \textbf{Iterate} for $\iteration = 0,1,2,\ldots$ until convergence:
    \begin{itemize}
      \item \emph{Backward pass} (dual state update): Set $\dualstate^{(\iteration)}_{\horizon}=f$;
      
        for $t=\horizon-1,\ldots,\initial$:
        \begin{equation*}
          \dualstate^{(\iteration)}_t = \sum_{s=1}^{\min(\order, \horizon-t)}\transitionModel_{t+s,s}^\transpose\dualstate^{(\iteration)}_{t+s}
            + \observationModel_t^\transpose\control^{(\iteration)}_t
        \end{equation*}

      \item \emph{Forward pass} (momentum update): Set
        $\momentum^{(\iteration)}_{\initial}=\initialCovariance\dualstate^{(\iteration)}_{\initial}$;

        for $t=1,\ldots,\horizon$:
        \begin{equation*}
          \momentum^{(\iteration)}_{t} = \sum_{s=1}^{\min(\order,t)}\transitionModel_{t,s}\momentum^{(\iteration)}_{t-s} + \covarianceProcessNoise_t\dualstate^{(\iteration)}_{t}
        \end{equation*}

      \item \emph{Control update} (gradient descent):
        \begin{equation*}
          \control^{(\iteration+1)}_t = \control^{(\iteration)}_t
            - \stepsize^{(\iteration)}_t\left(\control^{(\iteration)}_t + \covarianceObservationNoise_t^\inverse\observationModel_t\momentum^{(\iteration)}_t\right)\quad\forall~t\in\timeSpace_-
        \end{equation*}
        where $\stepsize^{(\iteration)}_t\succ 0$ is determined by the L-BFGS \cite{byrd2016stochastic} with the line search satisfying the strong Wolfe condition.
    \end{itemize}
    \item \textbf{Output}: The optimal estimator is given by
    \begin{equation*}
      \estimator_\horizon = \dualstate_{\initial}^\transpose\initialMean
      - \sum_{t=0}^{\horizon-1} \control_t^\transpose\observationProcess_t
    \end{equation*}
    where $\dualstate_{\initial}=\dualstate_{\initial}^{(\iteration)}$ and $\control=\control^{(\iteration)}$ are the converged solution of the initial dual state and control, respectively. 
  \end{enumerate}
\end{algorithm}

\subsection{Relationship to Transformer-like Architectures}
\label{sec:transformer-arch-relation}

The proposed dual filter bears a structural resemblance to the layer-wise 
operations of a decoder-only transformer, which we now describe. Two 
structural parallels are noted. First, both the transformer and the dual 
filter operate on a $\stateDimension \times \horizon$ data structure, 
preserving the dimensionality and length of the input sequence. Second, 
the iterative dual filter, comprising backward and forward passes to 
update the dual state $\dualstate$ and momentum $\momentum$, mirrors 
the layer-wise transformations of a transformer, with the momentum 
sequence $\momentum$ playing the role of the embedding sequence $\rho$. 
Within this framework, the control sequence $\control$ serves as the 
linear Gaussian analogue of attention weights, weighting historical 
observations to minimize the mean-squared prediction error. Unlike 
softmax attention weights, the optimal control $\control^*$ is obtained 
via the explicit formula~\eqref{eq:optimal_control}. The correspondence 
is depicted in~\Fig{fig:transformer-objective}, and causality is 
inherently enforced since $\estimator_\horizon$ depends only on past 
observations $\observationProcess_{\initial:\horizon-1}$.

\begin{figure}[t]
  \centering
  \begin{tikzpicture}
    \input{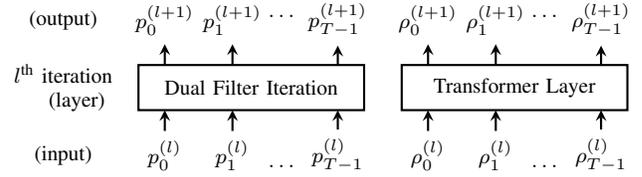}
  \end{tikzpicture}
  \caption{
    Structural correspondence between the dual filter iteration and a self-attention layer in a decoder-only transformer. 
    (left) The proposed iteration propagates the momentum sequence from $\momentum^{(\iteration)}_{\timeSpace_-}$ to $\momentum^{(\iteration+1)}_{\timeSpace_-}$. 
    (right) A transformer layer maps input embeddings $\rho^{(\iteration)}_{\timeSpace_-}$ to updated ones $\rho^{(\iteration+1)}_{\timeSpace_-}$. 
  }
  \label{fig:transformer-objective}
  \vspace{\baselineskip}
\end{figure}

\section{Comparison with Classical Methods}
\label{sec:classical-methods}

\subsection{Growing-state Kalman Filter}
Following~\cite{bar2001estimation} and~\cite{tang2024augmented}, the growing state estimate is defined as $\estimated{\stateProcess}_{\initial:t|t'} \defined \Expectation{\stateProcess_{\initial:t}\given\observationProcess_{\initial:t'}}$, and the error covariance be defined as $\covarianceError_{\initial:t|t'} \defined \Covariance{\stateProcess_{\initial:t}-\estimated{\stateProcess}_{\initial:t|t'}\given\observationProcess_{\initial:t'}}$ for $0\leq t'\leq t\leq\horizon$.

At each time step $t$, the Kalman filter is implemented through the following two-step recursive procedure:
\begin{enumerate}
  \item \emph{Correction Step:}
    \begin{itemize}
    \item Compute Kalman gain: 
    \begin{equation*}
      \KalmanGain_t = \covarianceError_{\initial:t|t-1}\adjusted{\observationModel}_t^\transpose(\adjusted{\observationModel}_t\covarianceError_{\initial:t|t-1}\adjusted{\observationModel}_t^\transpose+\covarianceObservationNoise_t)^\inverse
    \end{equation*}
    where $\adjusted{\observationModel}_t\defined \begin{bmatrix} 0 & \observationModel_t \end{bmatrix}_{\observationDimension\times (t+1)\stateDimension}$ is the adjusted observation matrix to ensure dimensional consistency.
    \item Update state estimate:
    \begin{equation*}
      \estimated{\stateProcess}_{\initial:t|t} = \estimated{\stateProcess}_{\initial:t|t-1} + \KalmanGain_t (\observationProcess_t - \adjusted{\observationModel}_t \estimated{\stateProcess}_{\initial:t|t-1})
    \end{equation*}
    where $\observationProcess_t$ is the newly obtained observation at time $t$.
    \item Update error covariance:
    \begin{equation*}
      \covarianceError_{\initial:t|t} = (\identity{(t+1)\stateDimension} - \KalmanGain_t \adjusted{\observationModel}_t) \covarianceError_{\initial:t|t-1}
    \end{equation*}
    where $\identity{\mathsf{n}}$ is an identity matrix with dimensions $\mathsf{n}\times\mathsf{n}$.
  \end{itemize}

  \item \emph{Transition Step:}
  \begin{itemize}
    \item Predict state estimate:
    \begin{equation*}
      \estimated{\stateProcess}_{\initial:t+1|t} = \begin{bmatrix}
        \hfill\estimated{\stateProcess}_{\initial:t|t} \\
        \adjusted{\transitionModel}_{t+1} \estimated{\stateProcess}_{\initial:t|t}
      \end{bmatrix}
    \end{equation*}
    where the adjusted transition matrix $\adjusted{\transitionModel}_{t+1} \in \reals^{\stateDimension \times (t+1)\stateDimension}$ is defined as $ \adjusted{\transitionModel}_{t+1} \defined \begin{bmatrix} 0 & \transitionModel_{t+1,\order} & \cdots & \transitionModel_{t+1,1} \end{bmatrix}$ which is appropriately padded with zeros or truncated to ensure dimensional consistency.
    The size of the growing state estimation vector $\estimated{\stateProcess}_{\initial:t+1|t}$ is $(t+1)\stateDimension$.
    \item Predict error covariance:
    \begin{equation*}
      \covarianceError_{\initial:t+1|t} = \begin{bmatrix}
        \hfill\covarianceError_{\initial:t|t} & \phantom{\adjusted{\transitionModel}_{t+1}}\covarianceError_{\initial:t|t}\adjusted{\transitionModel}_{t+1}^\transpose\hfill\\
        \adjusted{\transitionModel}_{t+1}\covarianceError_{\initial:t|t} &
        \adjusted{\transitionModel}_{t+1} \covarianceError_{\initial:t|t} \adjusted{\transitionModel}_{t+1}^\transpose + \covarianceProcessNoise_{t+1}
      \end{bmatrix}
    \end{equation*}
    where the shape of the growing error covariance matrix is $(t+1)\stateDimension \times (t+1)\stateDimension$. 
  \end{itemize}
\end{enumerate}
The initial state and error covariance estimates are given by
\begin{equation*}
  \estimated{\stateProcess}_{\initial:\initial|-1} = \estimated{\stateProcess}_{\initial|-1} \defined \initialMean,\quad \covarianceError_{\initial:\initial|-1} = \covarianceError_{\initial|-1} \defined \initialCovariance
\end{equation*}
and the prediction simply follows 
\begin{equation}
  \estimated{\observationProcess}_{\horizon|\horizon-1} = \observationModel_{\horizon} \estimated{\stateProcess}_{\horizon|\horizon-1}
  \label{eq:kalman-prediction}
\end{equation}
where $\estimated{\stateProcess}_{\horizon|\horizon-1}$ is computed in transition step at $t=\horizon-1$.

\newP{Complexity Analysis}
At time $t$, the filter maintains a state vector of size $(t+1)\stateDimension$ and a covariance matrix of size $(t+1)\stateDimension \times (t+1)\stateDimension$, making each step cost $\Order{t^2 \stateDimension^2}$. Summing over $t \in \timeSpace_-$ yields a total time complexity of
\begin{equation*}
\sum_{t \in \timeSpace_-} \Order{t^2 \stateDimension^2}
= \Order{\horizon^3 \stateDimension^2}.
\end{equation*}
The memory complexity is $\Order{\horizon^2 \stateDimension^2}$, due to storing a covariance matrix of size $(\horizon+1)\stateDimension \times (\horizon+1)\stateDimension$.



\subsection{Batch Smoothing}
\label{paragraph:batch-smoothing}

Since $(\stateProcess, \observationProcess)$ is jointly Gaussian, the smoothing distribution $\Probability{\stateProcess_{\timeSpace_-} \given \observationProcess_{\timeSpace_-}}$ is also Gaussian~\cite{anderson2005optimal}.
Denote the smoothed state estimates as
\begin{equation*}
  \estimated{\stateProcess}_{t|\horizon-1} \defined \Expectation{\stateProcess_t \given \observationProcess_{\timeSpace_-}},\quad \forall t \in \timeSpace,
\end{equation*}
and the means $\mean{\stateProcess}_{\timeSpace_-} \defined \Expectation{\stateProcess_{\timeSpace_-}}$, $\mean{\observationProcess}_{\timeSpace_-} \defined \Expectation{\observationProcess_{\timeSpace_-}}$ together with the covariances $\covState \defined \Covariance{\stateProcess_{\timeSpace_-}},~\covObs \defined \Covariance{\observationProcess_{\timeSpace_-}},~\covStateObs \defined \Covariance{\stateProcess_{\timeSpace_-}, \observationProcess_{\timeSpace_-}}$.
The optimal smoothing prediction is given by the conditional mean, which has an affine form in the observations $\observationProcess_{\timeSpace_-}$:
\begin{equation}
  \estimated{\stateProcess}_{\timeSpace|\horizon-1}
  = \Expectation{\transitionModel\stateProcess_{\timeSpace_-} \given \observationProcess_{\timeSpace_-}}
  = \smoothingWeight \observationProcess_{\timeSpace_-} + \smoothingBias,
  \label{eq:affine-estimator}
\end{equation}
where $\smoothingWeight \in \reals^{\horizon\stateDimension \times \horizon\observationDimension}$ is the smoothing projection weight and $\smoothingBias \in \reals^{\horizon\stateDimension}$ is the mean correction.
By imposing the unbiasedness condition, i.e., $\Expectation{\estimated{\stateProcess}_{\timeSpace_-|\horizon-1}} = \Expectation{\stateProcess_{\timeSpace_-}}$, one obtains the bias term:
\begin{equation}
  \smoothingBias = \transitionModel\mean{\stateProcess}_{\timeSpace_-} - \smoothingWeight\mean{\observationProcess}_{\timeSpace_-}.
  \label{eq:unbiasedness}
\end{equation}
Next, by invoking the orthogonality condition, namely that the estimation error is uncorrelated with the centered observations, i.e., $\Expectation{(\stateProcess_{\timeSpace_-} - \estimated{\stateProcess}_{\timeSpace_-|\horizon-1})(\observationProcess_{\timeSpace_-}-\mean{\observationProcess}_{\timeSpace_-})^\transpose} = 0$, one obtains the normal equations for $\smoothingWeight$:
\begin{align}
  \transitionModel\covStateObs = \smoothingWeight\covObs \implies \smoothingWeight = \transitionModel\covStateObs (\covObs)^\inverse.\label{eq:orthogonality}
\end{align}
The details of the computation of $\mean{\stateProcess}_{\timeSpace_-}$, $\mean{\observationProcess}_{\timeSpace_-}$, $\covStateObs$, and $\covObs$ are given in Appendix~\ref{appdx:mean-covariance}.
The prediction using the batch smoothing approach is then
\begin{subequations}
  \label{eq:smoothing-prediction}
  \begin{align}
    \estimated{\stateProcess}_{\timeSpace|\horizon-1}
    &= \transitionModel\!\left(\mean{\stateProcess}_{\timeSpace_-}
       + \covStateObs(\covObs)^\inverse
         (\observationProcess_{\timeSpace_-} - \mean{\observationProcess}_{\timeSpace_-})\right),
    \label{eq:smoothing-state-prediction}\\
    \estimated{\observationProcess}_{\horizon|\horizon-1}
    &= \adjusted{\observationModel}_{\horizon}\,\estimated{\stateProcess}_{\timeSpace|\horizon-1},
  \end{align}
  where $\adjusted{\observationModel}_{\horizon} \defined \begin{bmatrix} 0 & \observationModel_{\horizon} \end{bmatrix}_{\observationDimension\times\horizon\stateDimension}$ is the adjusted observation matrix for time index $\horizon$, and the initial prediction is given by $\estimated{\observationProcess}_{\initial|-1} = \observationModel_{\initial}\initialMean$ for completeness.
\end{subequations}
One may recover the control trajectory corresponding to each entry of $\estimated{\observationProcess}_{\horizon|\horizon-1}$ by taking $f$ as each row of $\observationModel_\horizon$ based on the duality principle (\Thm{thm:duality-principle}). The control is given by
\begin{equation}
  \control_t^\transpose = -f^\transpose(\smoothingWeight)_{\horizon,t},
  \quad \forall~t\in\timeSpace_-,
  \label{eq:classical-control}
\end{equation}
where $(\smoothingWeight)_{\horizon,t}\in\reals^{\stateDimension\times\observationDimension}$ is the $(\horizon,t+1)$-th block of $\smoothingWeight$, corresponding to the projection from $\observationProcess_t$ to $\estimated{\stateProcess}_{\horizon|\horizon-1}$.

\newP{Complexity Analysis}
The computational bottleneck is the direct inversion of the joint observation covariance matrix $\covObs \in \reals^{\horizon\observationDimension \times \horizon\observationDimension}$. 
Although derived from structured models, $\covObs$ is generally dense under non-Markovian dynamics, resulting in a time complexity of $\Order{\horizon^3 \observationDimension^3}$. 
Memory requirements scale as $\Order{\horizon^2 (\observationDimension + \stateDimension)^2}$ to accommodate the storage of the state covariance $\covState$, the observation covariance $\covObs$, and the cross-covariance $\covStateObs$.


\subsection{Causal Wiener--Hopf Filter}
\label{paragraph:wiener-hopf}

The decoder-only transform has a causal structure, where the prediction at time $t$ can only depend on observations up to time $t-1$.
For this reason, it is also useful to consider the causal counterpart of the batch smoothing approach, which is the causal Wiener--Hopf filter~\cite{wiener1949extrapolation}.
Denote the filtered state estimates stacked as
\begin{equation*}
  \estimated{\stateProcess}_{\timeSpace|\timeSpace_-}
  \defined
  \begin{bmatrix} \estimated{\stateProcess}_{1|0} \\ \vdots \\ \estimated{\stateProcess}_{\horizon|\horizon-1} \end{bmatrix},~
  \estimated{\stateProcess}_{t|t-1} \defined \Expectation{\stateProcess_t \given \observationProcess_{\initial:t-1}},~ \forall t \in \timeSpace.
\end{equation*}
The conditional mean estimated from the filter can be represented in an affine form as follows:
\begin{align}
  \estimated{\stateProcess}_{\timeSpace|\horizon-1}
  = \Expectation{\transitionModel\stateProcess_{\timeSpace_-} \given \observationProcess_{\timeSpace_-}}
  = \filteringWeight \observationProcess_{\timeSpace_-} + \filteringBias,
  \label{eq:affine-estimator-filter}
\end{align}
where $\filteringWeight \in \reals^{\horizon\stateDimension \times \horizon\observationDimension}$ is the filtering projection weight and the bias term \(\filteringBias\) is given by
\begin{equation}
  \filteringBias = \transitionModel \mean{\stateProcess}_{\timeSpace_-} - \filteringWeight \mean{\observationProcess}_{\timeSpace_-}.
  \label{eq:unbiasedness-filter}
\end{equation}
The causality constraint is enforced by requiring the projection $\filteringWeight$ to be block lower-triangular ($(\filteringWeight)_{t,t'} = 0$ for $t' > t$), so that $\estimated{\stateProcess}_{t|t-1}$ depends only on $\observationProcess_{\initial:t-1}$ for all $t\in\timeSpace$.
To enforce causality efficiently, we use block Cholesky factorization~\cite{chen2013block} to factor $\covObs = \innovations\innovations^\transpose$ with block lower-triangular $\innovations \in \reals^{\horizon\observationDimension\times\horizon\observationDimension}$, and combine with the orthogonality condition~\eqref{eq:orthogonality} to obtain
\begin{equation}
  \filteringWeight
  = \Bigl[\transitionModel\covStateObs (\innovations^\transpose)^\inverse\Bigr]_{\mathrm{lower}}
    \innovations^{\inverse},
  \label{eq:wiener-hopf-projection}
\end{equation}
where $[\,\cdot\,]_{\mathrm{lower}}$ denotes block lower-triangular projection. The computation of the covariances $\covStateObs$ and $\covObs$ follows Appendix~\ref{appdx:mean-covariance}.
Combining the affine form~\eqref{eq:affine-estimator-filter} with the bias formula~\eqref{eq:unbiasedness-filter} and the projection~\eqref{eq:wiener-hopf-projection}, the prediction is
\begin{subequations}
  \label{eq:filtering-prediction}
  \begin{align}
    \estimated{\stateProcess}_{\timeSpace|\timeSpace_-}
    &= \transitionModel\mean{\stateProcess}_{\timeSpace_-}
       + \filteringWeight(\observationProcess_{\timeSpace_-} - \mean{\observationProcess}_{\timeSpace_-}),
    \label{eq:filtering-state-prediction}\\
    \estimated{\observationProcess}_{\horizon|\horizon-1}
    &= \adjusted{\observationModel}_{\horizon}\,\estimated{\stateProcess}_{\timeSpace|\timeSpace_-},
  \end{align}
\end{subequations}
where $\filteringWeight$ is given by~\eqref{eq:wiener-hopf-projection}.
To recover the control trajectory, one takes $f$ as each row of $\observationModel_\horizon$ based on the duality principle (\Thm{thm:duality-principle}), yielding~\eqref{eq:classical-control} with $\smoothingWeight$ replaced by $\filteringWeight$.

\newP{Complexity Analysis}
Block Cholesky factorization of $\covObs$ dominates the computational cost, maintaining the $\Order{\horizon^3 \observationDimension^3}$ time and $\Order{\horizon^2 (\observationDimension + \stateDimension)^2}$ memory complexities seen in batch smoothing.

\begin{remark}
  While the growing-state Kalman filter, batch smoothing, the causal Wiener-Hopf filter, and the proposed dual filter utilize distinct computational procedures and exhibit different complexities, they are fundamentally equivalent in terms of optimality.
  Because each approach solves the same underlying estimation problem, all four methods yield identical results for the prediction $\estimated{\observationProcess}_{\horizon|\horizon-1}$.
\end{remark}


\begin{figure*}[t]
  \medskip
  \centering
  \includegraphics[width=0.95\linewidth]{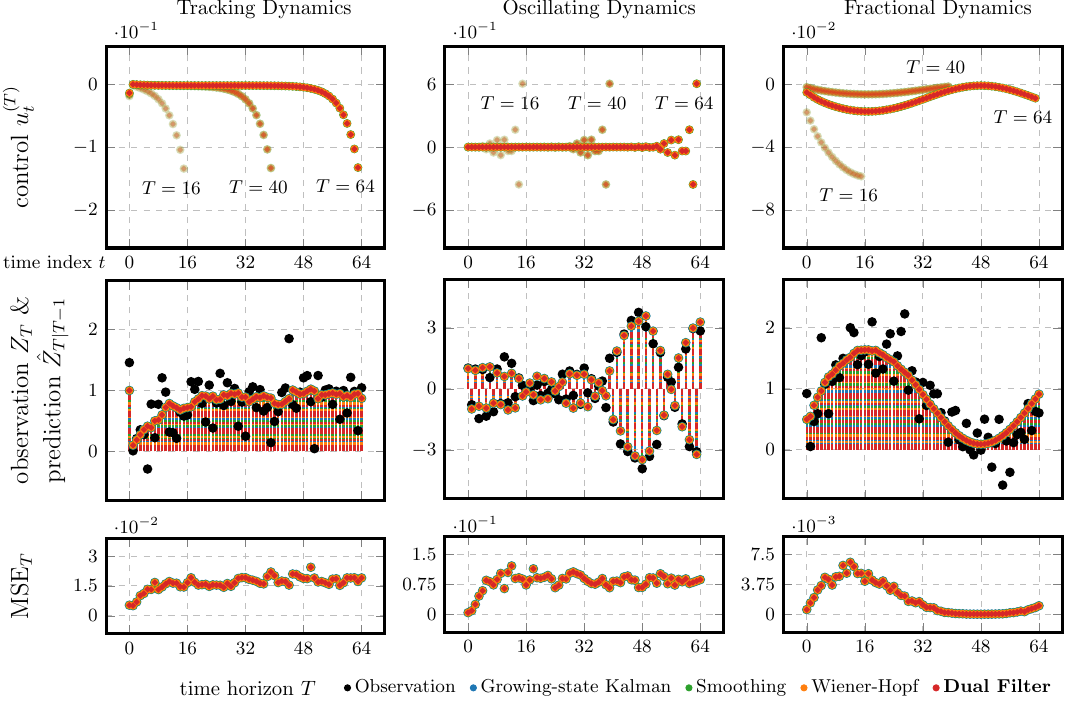}
  \caption{
    Numerical comparison of the proposed dual filter with classical estimation methods across three dynamical systems. 
    The columns correspond to the tracking, oscillating, and cumulative fractional dynamics, respectively. 
    The first row shows the controls $\control_t^{(\horizon)}$ for $\horizon = 16$, $40$, and $64$, obtained from batch smoothing, the causal Wiener-Hopf filter, and the dual filter, which overlap exactly.
    The second row presents one realization of the predictions $\estimated{\observationProcess}_{\horizon|\horizon-1}$ alongside the noisy observations $\observationProcess_\horizon$, while the third row shows the mean squared error $\error_\horizon$ over a batch containing \(100\) trajectories; 
    in both cases, the results are indistinguishable across methods.
    Results are color-coded as follows: observations (black), growing-state Kalman filter (blue), batch smoothing (green), causal Wiener-Hopf filter (orange), and dual filter (red).
  }
  \label{fig:numerics}
\end{figure*}

\section{Numerical Examples}
\label{sec:numerics}

In this section, we evaluate the performance of the proposed dual filter and compare it against classical estimation methods through a series of numerical experiments. 
For clarity of presentation, we consider three representative classes of dynamics with dimensions given by $\stateDimension = \observationDimension = 1$.

\smallskip

\begin{enumerate}
  \item Tracking Dynamics:
  \begin{equation*}
    \stateProcess_{t} =
    \begin{cases}
      (1 - \trackingRate)\stateProcess_{t-1} + \trackingRate \stateProcess_0 + \processNoise_{t}, & 1 < t \leq \horizon \\
      \hfill \trackingRate \stateProcess_0 + \processNoise_{t}, & t = 1
    \end{cases}
  \end{equation*}
  where $\trackingRate\in(0,1)$ is the tracking rate that controls how quickly the system tracks the initial state that depends on the entire history of the state process, and thus corresponds to a full-order system with $\order = \horizon$.
  The observation model is set to $\observationModel_t \equiv 1 \ \forall~t\in\set{0,1,\ldots,\horizon}$.

  \smallskip

  \item Oscillating Dynamics:
  \begin{equation*}
    \stateProcess_{t} =
    \begin{cases}
      (- 2 \cos \oscillationAngle)\stateProcess_{t-1} - \stateProcess_{t - 2} + \processNoise_{t}, & 1 < t \leq \horizon \\
      \phantom{2}(- \cos \oscillationAngle)\stateProcess_{t-1} \hfill + \processNoise_{t}, & t = 1
    \end{cases}
  \end{equation*}
  where $\oscillationAngle>0$ represents the rotating angle that determines the oscillating frequency of the system's response. 
  This system is specifically selected to demonstrate the model's capabilities in a discrete-time setting. 
  Notably, the negative sign associated with the cosine term induces a sign-flipping behavior at each time step, a feature intrinsic to the discrete-time formulation.
  Similar to the tracking system, the observation model is set to $\observationModel_t \equiv 1 \ \forall~t\in\set{0,\ldots,\horizon}$, and the system corresponds to a fixed-order case with $\order = 2$.

  \smallskip

  \item Cumulative Fractional Dynamics:
  \begin{equation*}
    \stateProcess_{t} = \sum_{s=1}^{t} \frac{1}{\fractionCoefficient_{t,s}} \stateProcess_{t-s} + \processNoise_{t}, \quad 1 \leq t \leq \horizon
  \end{equation*}
  where the fractional coefficients $\fractionCoefficient_{t,s}=(t-s+1)^\polynomialOrder$ which decay polynomially with the order $\polynomialOrder>1$, and thus corresponds to a full-order system with $\order = \horizon$.
  The observation model is set to $\observationModel_t = \frac{1}{2} (1 + 0.9\sin(\discreteFrequency t))$ with discrete frequency $\discreteFrequency>0$, which introduces a time-varying observation model.
\end{enumerate}

\smallskip


\noindent The systems are evaluated within a noisy regime characterized by specific stochastic parameters. 
The initial state is defined by a mean of $\initialMean = 1$ and a variance of $\initialCovariance = 5 \cdot 10^{-3}$. 
Furthermore, the system is subjected to time-invariant process and observation noise; the process noise variance is fixed at $\covarianceProcessNoise_t \equiv 5 \cdot 10^{-3}$, while the observation noise variance is set to $\covarianceObservationNoise_t \equiv 10^{-1}$. 
These statistical properties are assumed to remain constant throughout the horizon $t \in \set{0, 1, \ldots, \horizon}$.

\subsection{Estimation Accuracy \& Control Interpretation}

To assess estimation accuracy, we compare the predictions produced by the proposed dual filter with those obtained from classical methods, including the growing-state Kalman filter \eqref{eq:kalman-prediction}, batch smoothing \eqref{eq:smoothing-prediction}, and the causal Wiener-Hopf filter \eqref{eq:filtering-prediction}, over horizons ranging from $ \horizon = 0 $ to $ \horizon = 64 $ across the three systems described above. 
The system parameters are set to $\trackingRate = 0.1$, $\oscillationAngle = \frac{\pi}{18}$, $\polynomialOrder = 2$, and $\discreteFrequency = \frac{\pi}{32}$.
Across all systems, the predictions from the different methods are identical, indicating that the dual filter matches the performance of established approaches.
To provide a quantitative assessment, we evaluate the empirical mean squared error $\error_\horizon$ for each time horizon $\horizon$, defined as
\begin{equation*}
  \error_\horizon \defined \frac{1}{N}\sum_{n=1}^N\frac{1}{2}\abs{\observationModel_\horizon \stateProcess_\horizon^{(n)} - \estimated{\observationProcess}_{\horizon|\horizon-1}^{(n)}}^2
\end{equation*}
where the index $n$ denotes the $n$-th trajectory, and $N=100$ is the total number of trajectories.
Here, $ \estimated{\observationProcess}_{\horizon|\horizon-1}^{(n)}$ denotes the prediction from each method, and $ \stateProcess_\horizon^{(n)}$ denotes the underlying true state. 
The results, shown in the second and third rows of \Fig{fig:numerics}, demonstrate identical error profiles across all horizons, confirming that the proposed dual filter achieves estimation accuracy on par with classical approaches.

Beyond validating prediction accuracy, we analyze how the dual filter weights historical observations.
The control trajectories $\control_t^{(\horizon)}$ are examined across varying horizons $\horizon$. 
As illustrated in \Fig{fig:numerics}, the trajectories generated by the dual formulation align precisely with those from classical projection-based methods \eqref{eq:classical-control}, providing empirical support for the duality principle in \Thm{thm:duality-principle}.


The structure of controls reflects the temporal characteristics of each system. 
For the tracking dynamics ($\order = \horizon$), the control places significant weight on the most recent observations, rapidly decays over the intermediate past, and exhibits a distinct spike at the initial time step. 
This pattern indicates that the optimal predictor relies primarily on the immediate past while retaining a direct dependence on the initial state. 
In the oscillating dynamics ($\order = 2$), the control is localized to the near past and displays a sign-flipping pattern consistent with the system's oscillatory behavior. 
For the cumulative fractional dynamics, the control magnitude varies in accordance with the time-varying observation model $\observationModel_t$, assigning stronger weight to observations with higher signal strength.
These results show that the dual filter achieves optimal predictions while yielding interpretable, dynamics-adaptive control structures.

\subsection{Computational Complexity}

Finally, we evaluate computational efficiency by measuring FLOPs across varying time horizons using the \emph{PAPI} interface \cite{papi2025main}. 
As shown in \Fig{fig:complexity}, the proposed dual filter achieves over a $100\times$ speedup compared to classical full-order methods at $\horizon = 2^{14}$. 
This superior scalability makes the framework ideal for long-horizon applications, such as sequence modeling, where traditional cubic complexity is computationally prohibitive.

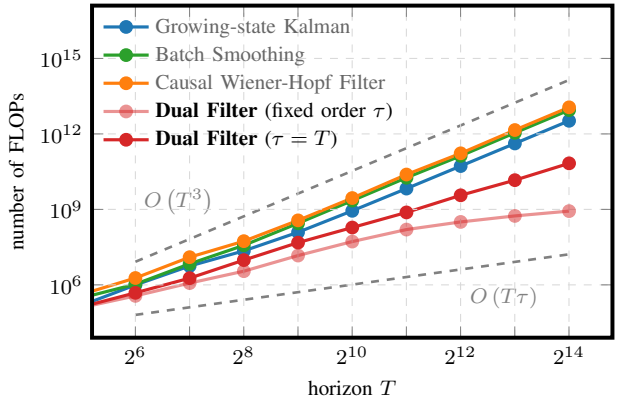
\begin{figure}
  \medskip
  \centering
  \footnotesize
  \begin{tabular}{lll}
    \toprule
    method & complexity & paradigm \\
    \midrule
    Growing-state Kalman Filter & $\Order{\horizon^3 \stateDimension^2}$ & recursive \\
    Batch Smoothing & $\Order{\horizon^3 \observationDimension^3}$ & batch \\
    Causal Wiener-Hopf Filter & $\Order{\horizon^3 \observationDimension^3}$ & batch \\
    Dual Filter & $\Order{\horizon (\order \stateDimension + \observationDimension)}$ & sequential$^*$ \\
    \bottomrule
  \end{tabular}

  \medskip

  \begin{tikzpicture}
    \pgfplotstableread[col sep=comma]{data/flops.csv}\complexityData

\def\minHorizonPower{6} 
\def\maxHorizonPower{14} 
\pgfmathsetmacro\diffHorizonPower{(\maxHorizonPower - \minHorizonPower)}
\pgfmathsetmacro\gapHorizonPower{\diffHorizonPower * 0.1}
\def\minHorizon{\fpeval{2^(\minHorizonPower - \gapHorizonPower)}}
\def\maxHorizon{\fpeval{2^(\maxHorizonPower + \gapHorizonPower)}}

\def\minFLOPsPower{5}
\def\maxFLOPsPower{16}
\pgfmathsetmacro\diffFLOPsPower{(\maxFLOPsPower - \minFLOPsPower)}
\pgfmathsetmacro\gapFLOPsPower{\diffFLOPsPower * 0.1}
\def\ymin{\fpeval{\minFLOPsPower - \gapFLOPsPower}}
\def\ymax{\fpeval{\maxFLOPsPower + \gapFLOPsPower}}

\pgfmathsetmacro{\flopsScaleOrder}{log2(10)}  

\typeout{gap Horizon Power: \gapHorizonPower}
\typeout{min Horizon: \minHorizon, max Horizon: \maxHorizon}

\begin{axis}[
  width = 8.5cm,
  height = 6cm,
  xlabel={horizon $\horizon$},
  ylabel={number of FLOPs},
  xmode=log,
  log basis x=2,
  xmin=\minHorizon, xmax=\maxHorizon,
  ymin=\ymin, ymax=\ymax,
  xtickten={6, 7, 8, 9, 10, 11, 12, 13, 14}, 
  xticklabels={$2^6$, , $2^8$, , $2^{10}$, , $2^{12}$, , $2^{14}$},
  ytick={6, 9, 12, 15}, 
  yticklabels={$10^6$, $10^9$, $10^{12}$, $10^{15}$},
  minor x tick num=1,           
  grid = both, 
  grid style={dashed, gray!30},
  legend style={
    at={(0,1)},                
    anchor=north west,         
    draw=none,                 
    fill=none,                 
    font=\footnotesize,        
  },
  legend cell align={left},
  line width=1.5pt,
  axis on top,
]

  \addplot[
    color=\kalmanColor,
    line width=\lineWidth,
    mark=*,
  ] table[
    x=horizon, 
    y expr=\thisrow{kalman}/\flopsScaleOrder,
  ] {\complexityData};
  \addlegendentry{\color{black!60}Growing-state Kalman}

  \addplot[
    color=\smoothingColor,
    line width=\lineWidth,
    mark=*,
  ] table[
    x=horizon, 
    y expr=\thisrow{smoothing}/\flopsScaleOrder,
  ] {\complexityData};
  \addlegendentry{\color{black!60}Batch Smoothing}

  \addplot[
    color=\causalColor,
    line width=\lineWidth,
    mark=*,
  ] table[
    x=horizon, 
    y expr=\thisrow{causal}/\flopsScaleOrder,
  ] {\complexityData};
  \addlegendentry{\color{black!60}Causal Wiener-Hopf Filter}
  
  \addplot[
    color=\dualColor,
    line width=\lineWidth,
    mark=*,
    opacity = 0.5,
  ] table[
    x=horizon, 
    y expr=\thisrow{dual-K}/\flopsScaleOrder,
  ] {\complexityData};
  \addlegendentry{\textbf{Dual Filter} (fixed order $\order$)}

  \addplot[
    color=\dualColor,
    line width=\lineWidth,
    mark=*,
  ] table[
    x=horizon, 
    y expr=\thisrow{dual-T}/\flopsScaleOrder,
  ] {\complexityData};
  \addlegendentry{\textbf{Dual Filter} ($\order=\horizon$)}

  \addplot [
    domain=64:16384, 
    dashed, 
    gray, 
    line width=1pt,
    samples=2
  ] {3*log2(x)/\flopsScaleOrder + 1.5}
  node [pos=0.1, above=8pt] {$\Order{\horizon^3}$};

  \addplot [
    domain=64:16384, 
    dashed, 
    gray, 
    line width=1pt,
    samples=2
  ] {log2(x)/\flopsScaleOrder + 3}
  node [pos=0.85, below=6pt] {$\Order{\horizon\order}$};

\end{axis}
  \end{tikzpicture}
  \caption{
    Computational complexity comparison. 
    FLOPs as a function of horizon $\horizon$ for the growing-state Kalman filter (blue), batch smoothing (green), causal Wiener-Hopf filter (orange), and the proposed Dual Filter (red). 
    Classical methods exhibit $\Order{\horizon^3}$ complexity, whereas the dual filter achieves linear complexity $\Order{\horizon}$ for fixed-order settings ($\order=2$, light red) and quadratic complexity $\Order{\horizon^2}$ for full-order settings ($\order=\horizon$, dark red). 
    Solid lines represent empirical measurements; dashed lines indicate theoretical trends. 
    $^*$Unlike the recursive Kalman filter, the dual filter is a sequential batch-processing algorithm that operates on the full observation sequence, a common dependency in batch methods.
  }
  \label{fig:complexity}
  \vspace{1.5\baselineskip}
\end{figure}

\section{Conclusions}
\label{sec:conclusion}

In this paper, we extended Kalman duality to non-Markovian linear Gaussian models, deriving a dual filtering algorithm that matches the estimation performance of classical methods, including batch smoothing and Wiener-Hopf filtering, while scaling linearly or quadratically with the time horizon, compared to the cubic scaling of classical approaches.
Numerical experiments confirm identical prediction accuracy and mean squared error across diverse dynamical systems. 
In concurrent work, we developed a dual filter for Hidden Markov Models~\cite{chang2025dual}. 
Future directions include extending this nonlinear framework to non-Markovian settings to capture long-range dependencies, investigating observation-dependent stochastic control policies~\cite{talebi2022duality} analogous to attention mechanisms, and exploring learning-based approaches in which transformers are trained to perform optimal filtering for unknown systems~\cite{du2023can}.


\section*{APPENDIX}
\renewcommand{\thesubsection}{\Alph{subsection}}
\numberwithin{equation}{subsection}
\setcounter{equation}{0} 

\subsection{Proof of Duality Principle (Theorem~\ref{thm:duality-principle})}
\label{appdx:duality-principle}

The duality principle is proved by first establishing the identity 
\begin{equation}\label{eq:pairing-identity}
  f^\transpose \stateProcess_{\horizon} - S_\horizon = \dualstate_\initial^\transpose(\stateProcess_\initial - \initialMean) + \sum_{t=1}^{\horizon}\dualstate_{t}^\transpose\processNoise_{t} + \sum_{t=0}^{\horizon-1}\control_t^\transpose\observationNoise_t
\end{equation}
and then taking the squared expectation of both sides, which yields the desired result since the three summands on the right-hand side are mutually uncorrelated zero-mean Gaussian random variables, so the expectation of their sum's square is the sum of their squared expectations.

\subparagraph{Step 1 (Pairing Identity)}
Using the observation process \eqref{eq:observation}, compact state process equation \eqref{eq:compact-process} and $\BDE$ \eqref{eq:dual_BDE}, we compute
\begin{align*}
  \dualstate_{\timeSpace}^\transpose\stateProcess_{\timeSpace}
  &= \dualstate_{\timeSpace_-}^\transpose\stateProcess_{\timeSpace_-} - \sum_{t\in\timeSpace_-}\control_t^\transpose(\observationProcess_t - \observationNoise_t) + \sum_{t\in\timeSpace}\dualstate_{t}^\transpose\processNoise_t
\end{align*}
which can be rearranged to
\begin{equation*}
  \dualstate_\horizon^\transpose\stateProcess_\horizon = \dualstate_\initial^\transpose\stateProcess_\initial - \sum_{t=0}^{\horizon-1}\control_t^\transpose\observationProcess_t + \sum_{t=0}^{\horizon-1}\control_t^\transpose\observationNoise_t + \sum_{t=1}^{\horizon}\dualstate_{t}^\transpose\processNoise_t
\end{equation*}
The above identity together with the terminal condition of dual state $\dualstate_\horizon=f$, and the definition of estimator $S_\horizon$ in \eqref{eq:estimator} gives \eqref{eq:pairing-identity}.

\subparagraph{Step 2 (Squared Expectation)}
Since $\stateProcess_\initial-\initialMean\sim\Gaussian(0,\initialCovariance)$, $\processNoise_t\sim\Gaussian(0,\covarianceProcessNoise_{t})$, $\observationNoise_t\sim\Gaussian(0,\covarianceObservationNoise_{t})$, and $\stateProcess_\initial,\processNoise,\observationNoise$ are mutually independent, all three summands on the right hand side of \eqref{eq:pairing-identity} are zero-mean and mutually uncorrelated.
Therefore,
\begin{equation*}
  \Expectation{\abs{f^\transpose\stateProcess_{\horizon}-S_\horizon}^2} = \abs{\dualstate_\initial}^2_{\initialCovariance} + \sum_{t=1}^{\horizon}\abs{\dualstate_{t}}^2_{\covarianceProcessNoise_t} + \sum_{t=0}^{\horizon-1}\abs{\control_t}^2_{\covarianceObservationNoise_t}
\end{equation*}
and this concludes the proof since it shows the quadratic cost defined in \eqref{eq:dual-cost} is exactly the mean squared error of the estimator $S_\horizon$ to the prediction $f^\transpose\stateProcess_\horizon$.
\hfill$\square$

\subsection{Proof of Optimal Control Formula (Theorem~\ref{thm:hamilton})}
\label{appdx:hamilton}

To establish the optimality condition, let $\control^\optimum$ denote the optimal control and consider an arbitrary perturbation $\Delta\control$. 
We define the perturbed control trajectory as $\control = \control^\optimum + \Delta\control$. 
By leveraging the linearity of the dual control system \eqref{eq:backward}, the resulting dual state $\dualstate$ admits the decomposition $\dualstate = \dualstate^\optimum + \Delta\dualstate$. 
Here, $\dualstate^\optimum$ and $\Delta\dualstate$ are the unique solutions to the dual dynamics corresponding to $\control^\optimum$ (with terminal condition $\dualstate^\optimum_\horizon = f$) and $\Delta\control$ (with terminal condition $\Delta\dualstate_\horizon = 0$), respectively.
The cost functional $\dualCost_\horizon(\control; f)$ can be expanded quadratically around $\control^\optimum$ as follows:
\begin{equation*}
  \dualCost_\horizon(\control; f) = \dualCost_\horizon(\control^\optimum; f) + \dualCost_\horizon(\Delta\control; 0) + \mathcal{I}_\horizon(\control^\optimum, \Delta\control)
\end{equation*}
where $\dualCost_\horizon(\control^\optimum; f)$ represents the minimum cost, $\dualCost_\horizon(\Delta\control; 0)$ is the second-order error term (always non-negative), and the first-order variation (cross terms) is given by:
\begin{align*}
  \mathcal{I}_\horizon(\control^\optimum, \Delta\control) &= (\dualstate_\initial^\optimum)^\transpose\initialCovariance(\Delta\dualstate_\initial)\\
  &\quad + \sum_{t=1}^{\horizon}(\dualstate_{t}^\optimum)^\transpose\covarianceProcessNoise_t(\Delta\dualstate_{t}) + \sum_{t=\initial}^{\horizon-1}(\control_t^\optimum)^\transpose\covarianceObservationNoise_t(\Delta\control_t)
\end{align*}
By invoking the definition of the forward momentum process $\momentum^\optimum$ from \eqref{eq:forward} and substituting the dual control system dynamics for $\Delta\dualstate$, we can simplify the first-order variation into the following inner product form:
\begin{equation*}
  \mathcal{I}_\horizon(\control^\optimum, \Delta\control) = \sum_{t=\initial}^{\horizon-1} \left( \observationModel_t \momentum^\optimum_{t} + \covarianceObservationNoise_t \control^\optimum_{t} \right)^\transpose (\Delta\control_t)
\end{equation*}
The optimality of $\control^\optimum$ requires that the first-order variation $\mathcal{I}_\horizon(\control^\optimum, \Delta\control)$ must vanish for any arbitrary perturbation $\Delta\control$. 
This stationarity condition implies $\observationModel_t \momentum^\optimum_{t} + \covarianceObservationNoise_t \control^\optimum_{t} = 0$ for all $t \in \set{\initial, \dots, \horizon-1}$, which directly yields the optimal control law \eqref{eq:optimal_control}. 
This completes the proof. \hfill$\square$

\subsection{Computation of Means and Covariances}
\label{appdx:mean-covariance}

Let the mean of states be denoted as
\begin{equation*}
  \mean{\stateProcess}_{t}\defined\Expectation{\stateProcess_{t}}=\adjusted{\transitionModel}_{t}\mean{\stateProcess}_{\initial:t-1},\quad\forall~t\in\timeSpace
\end{equation*}
with the initialization $\mean{\stateProcess}_{\initial}=\initialMean$ given.
The mean of observations is then given by $\mean{\observationProcess}_t\defined\Expectation{\observationProcess_t} = \observationModel_t \mean{\stateProcess}_t$ for all $t\in\timeSpace_-$.
Define the pairwise joint covariances $\covariance_{\stateProcess_{t_1},\stateProcess_{t_2}} \defined \Covariance{\stateProcess_{t_1},\stateProcess_{t_2}}$, $\covariance_{\observationProcess_{t_1},\observationProcess_{t_2}} \defined \Covariance{\observationProcess_{t_1},\observationProcess_{t_2}}$, and $\covariance_{\observationProcess_{t_1},\stateProcess_{t_2}} \defined \Covariance{\observationProcess_{t_1},\stateProcess_{t_2}}$ for $t_1,t_2\in\timeSpace_-$.
Using the state and observation models \eqref{eq:model}, the covariances evolve as
\begin{align*}
  \covariance_{\stateProcess_{t+1},\stateProcess_{t}}&=\adjusted{\transitionModel}_{t+1} \covariance_{\stateProcess_{\initial:t},\stateProcess_{t}}\\
  \covariance_{\stateProcess_{t+1},\stateProcess_{t+1}}&=\adjusted{\transitionModel}_{t+1}\covariance_{\stateProcess_{\initial:t},\stateProcess_{\initial:t}}\adjusted{\transitionModel}_{t+1}^\transpose+\covarianceProcessNoise_{t+1}\\
  \covariance_{\observationProcess_{t_1},\stateProcess_{t_2}}&=\observationModel_{t_1} \covariance_{\stateProcess_{t_1},\stateProcess_{t_2}}\\
  \covariance_{\observationProcess_{t_1},\observationProcess_{t_2}}&=\observationModel_{t_1} \covariance_{\stateProcess_{t_1},\stateProcess_{t_2}} \observationModel_{t_2}^\transpose + \covarianceObservationNoise_{t_1}\indicator{t_1=t_2}
\end{align*}
with initialization $\covariance_{\stateProcess_{\initial},\stateProcess_{\initial}}=\initialCovariance$.
Here $\indicator{t_1=t_2}$ is the indicator function that equals $1$ if $t_1=t_2$ and equals $0$ otherwise.
Stacking all times up to $\horizon$, the above recursions are used to compute the joint covariances \(\covState\), \(\covStateObs\) and \(\covObs\) which are used in the batch smoothing and the causal Wiener-Hopf filter as described in \Sec{sec:classical-methods}.

\addtolength{\textheight}{-12cm}


\bibliographystyle{IEEEtran}
\bibliography{references}

\end{document}